
\documentclass[journal]{IEEEtran}

\ifCLASSINFOpdf
\else
\fi
\usepackage{cite,url,graphicx} 
\usepackage{amsmath}
\usepackage{amssymb}
\usepackage{amsfonts}
\usepackage{cite}
\usepackage{amsthm}
\newtheorem{theorem}{Theorem}
\usepackage{soul,xcolor}
\usepackage{pdfcomment}
\begin{document}
%
\title{Distributed Small-Signal Stability Conditions for  Inverter-Based Unbalanced Microgrids}
%
%
%
\author{Sai~Pushpak~Nandanoori,~\IEEEmembership{Member,~IEEE,}
        Soumya~Kundu,~\IEEEmembership{Member,~IEEE,}
        Wei~Du,~\IEEEmembership{Member,~IEEE,}
        Frank~K.~Tuffner,~\IEEEmembership{Member,~IEEE,}
        and~Kevin~P.~Schneider,~\IEEEmembership{Fellow,~IEEE}
\thanks{S. P. Nandanoori, S. Kundu, W. Du, F. K. Tuffner, and K. P. Schneider are with Pacific Northwest National Laboratory (PNNL), Richland, WA 99354 USA (e-mails:saipushpak.n@pnnl.gov, soumya.kundu@pnnl.gov, wei.du@pnnl.gov, francis.tuffner@pnnl.gov, kevin.schneider@pnnl.gov.}
\thanks{The PNNL is operated by Battelle for the U.S. Department of Energy under Contract DE-AC05-76RL01830.} 
}

\begin{table*}
\begin{large}
\textsuperscript{\textcopyright} 2020 IEEE.  Personal use of this material is permitted.  Permission from IEEE must be obtained for all other uses, in any current or future media, including reprinting/republishing this material for advertising or promotional purposes, creating new collective works, for resale or redistribution to servers or lists, or reuse of any copyrighted component of this work in other works.
\end{large}
\end{table*}

\newpage 
\maketitle

\begin{abstract}
The proliferation of inverter-based generation and advanced sensing, controls, and communication infrastructure have facilitated the accelerated deployment of microgrids. A coordinated network of microgrids can maintain reliable power delivery to critical facilities during extreme events. Low-inertia offered by the power-electronics–interfaced energy resources, however, can present significant challenges to ensuring stable operation of the microgrids. In this work, distributed small-signal stability conditions for inverter-based microgrids are developed that involve the droop-controller parameters and the network parameters (e.g. line impedances, loads). The distributed closed-form parametric stability conditions derived in this paper can be verified in a computationally efficient manner, facilitating the reliable design and operations of networks of microgrids. Dynamic phasor models have been used to capture the effects of electromagnetic transients. Numerical results are presented, along with PSCAD simulations, to validate the analytical stability conditions. Effects of design choices, such as the conductor types, and inverter sizes, on the small-signal stability of inverter-based microgrids are investigated to identify interpretable stable/unstable region estimates.
\end{abstract}

\begin{IEEEkeywords}
droop control, inverter-based microgrid, networked microgrid, small-signal stability.
\end{IEEEkeywords}

%
\IEEEpeerreviewmaketitle

\section*{Nomenclature}
\begin{tabular}{cl}
$\delta$	& Inverter terminal phase angle \\
$\omega$	& Inverter terminal frequency \\
$v$	& Inverter terminal voltage amplitude \\
$m_p$	& P-f droop coefficients \\
$m_q$	& Q-V droop coefficients \\
$\tau$	& Time constant of the low pass filter \\
$P_f$	& Filtered real power measurement \\
$Q_f$	& Filtered reactive power measurement \\
$Z^{abc}$	& Three phase impedance matrix of the network \\
$R^{abc}$	& Three phase resistance matrix of the network \\
$L^{abc}$	& Three phase inductance matrix of the network \\
$I_{ik}^{abc}$	& Three phase branch currents between buses $i,k$ \\ 
$I_{(l,i)}^{abc}$	& Three phase load currents at bus $i$ \\
$V_i^{abc}$	& Three  phase complex voltage at bus $i$ \\
$Y_{net}^{abc}$ &	Three phase network admittance matrix \\
$Y_{load}^{abc}$ & 	Three phase load admittance matrix \\
$Y_{net}(s)$	& Network admittance matrix in Laplace domain \\
$Y_{load}(s)$	& Load admittance matrix in Laplace domain \\
$\Lambda_p$	& Diagonal matrix whose entries are inverse \\
& of \textit{P-f} droop coefficients \\
\end{tabular}
\begin{tabular}{cl}
$\Lambda_q$	& Diagonal matrix whose entries are inverse \\ 
& of \textit{Q-V} droop coefficients \\
$\phi$	& Inverter phase angles (vector) \\	$\dot{\phi}$  & Inverter frequencies (vector) \\
$\rho$	& Inverter voltages (vector) \\
$T_1,T_2$	& Transformation matrices \\
$W, \dot{W}$ & 
Lyapunov function and its time derivative
\end{tabular}

\section{Introduction}
Microgrids are being deployed in increased numbers in recent years because they can be utilized as a resiliency resource to keep indispensable loads connected in case of an extreme event \cite{schneider2016evaluating}. With the proliferation of distributed energy resources, especially renewable energy resources (such as roof-top solar) and due to advancements in sensing, control, and communication technologies, real-time and autonomous coordination of distributed energy resources on a microgrid is increasingly possible \cite{schneider2019distributed,lasseter2011smart}. Two or more microgrids, when networked with each other, can offer significant benefits in operational efficiency, reliability and resilience during both normal conditions and extreme events; via effective utilization of local distributed generation units and flexible loads, \cite{lasseter2011smart,schneider2018enabling}.


Choice of various design parameters (e.g. droop gains, inverter size, conductor types) play a key role in determining the operational reliability and stability of inverter-based microgrids \cite{pogaku2007modeling,coelho2002small,du2014voltage}. A power sharing algorithm to load changes with sparse communication is presented in \cite{pushpak2015power}.  An explicit identification of parametric stability regions via simulation-based methods and eigenvalue analysis can be oftentimes computationally expensive, especially when the system size and/or the number of parameters involved is large, while not providing any robustness guarantees (e.g. stability margins). Many of the existing analytical methods used to obtain closed form estimates of the stability regions, on the other hand, suffer from various modeling assumptions that do not hold in practice (e.g. the works of \cite{simpson2013synchronization,schiffer2014conditions,vorobev2017framework,kundu2019identifying,kundu2019distributed,Vorobev2019decentralized}, as discussed in Sec.\,\ref{sec:review}). This inspired our work to consider realistic models and derive closed-form analytical stability conditions that can be verified in a computationally efficient manner.

\subsection{Review of Relevant Works}\label{sec:review}
Power from the alternative energy sources (e.g. renewable generation, energy storage units) are fed into the microgrid via inverters. Traditionally, inverters are designed to operate as a controlled current source regulating delivery of active (and reactive) power at its terminal, in what is known as the ``grid-following'' mode \cite{kroposki2017achieving}. Since grid-following inverters do not have the ability to regulate voltage or frequency, these cannot work in stand-alone mode. With advances in the inverter control technologies, voltage source converter-based ``grid-forming'' inverters are being increasingly adopted for their ability to regulate voltage and frequency in autonomous operation of islanded microgrids \cite{lasseter2011smart,chandorkar1993control,denis2018migrate}. This presents new challenges in modeling, analysis, and control of the transient behavior of the inverter-based microgrid, such as: 1) the shrinking gap in timescales of steady-state dispatch decisions and real-time control strategies, and 2) the inadequacy of the quasi-static phasor representations of the voltages and currents for transient analysis \cite{taylor2016power}. The choice of droop control gains in grid-forming inverters, 
along with network parameters and loading conditions, are critical to the stable operation of an inverter-based microgrid \cite{pogaku2007modeling}. In particular, droop gains associated with the inverters' outer power control loops have been identified as the defining factors for the dominant low-frequency eigenmodes, which determine the small-signal stability of the network \cite{coelho2002small,du2014voltage}.

In \cite{farrokhabadi2019microgrid}, the IEEE PES Task Force on microgrid stability discusses different notions of stability in microgrids and identifies the importance of small-signal stability, which they referred to as small-perturbation stability. In a review of stability issues in microgrids \cite{majumder2013some}, it was noted that the stability aspect depends on the type of microgrid (utility, facility, or remote), mode of operation (islanded or grid-connected), network parameters, and the control topology of the power electronic converters. In particular, the small-signal stability of microgrids has received much attention in the literature \cite{pogaku2007modeling,coelho2002small,du2014voltage,mendoza2014impedance,xu2016microgrids,yan2018small}, because ensuring operational reliability of a microgrid during small disturbances from steady state is considered necessary for successful design. In contrast to many related works in the literature, such as \cite{pogaku2007modeling,coelho2002small,du2014voltage,mendoza2014impedance,xu2016microgrids,yan2018small}, the focus of this present work is on deriving closed-form conditions of small-signal stability defined over various design parameters.

Identification of parametric stability conditions for inverter-based microgrids in terms of the droop gains has generated some interest recently \cite{simpson2013synchronization,schiffer2014conditions,vorobev2017framework,kundu2019identifying,kundu2019distributed,Vorobev2019decentralized}. In \cite{simpson2013synchronization}, the authors used a lossless microgrid model to draw the similarity between droop-controlled inverters and Kuramoto oscillators, and prescribed droop gain values for frequency synchronization and desirable power sharing. Port-Hamiltonian representation of the lossless microgrid was adopted in \cite{schiffer2014conditions} to derive parametric stability conditions on droop gains and power setpoints. Interestingly, stability conditions were found to be independent of frequency-droop controller gains and set points but depend on the voltage-droop controller gains and setpoints. A main drawback in the above two works is the assumption of a lossless microgrid, which is invalid in low-to-medium voltage systems with non-negligible line resistance-to-reactance ratios. This assumption has been relaxed in \cite{vorobev2017framework,kundu2019identifying,kundu2019distributed,Vorobev2019decentralized}, where the authors derived stability conditions on the droop coefficients for a (lossy) microgrid, which are verifiable locally at each inverter node. In \cite{vorobev2017framework}, the authors derive closed-form distributed stability conditions involving droop gains and line parameters. However, the conditions were derived for a specific microgrid model with no shunt elements (e.g., loads, shunt capacitors). A sum-of-squares optimization method was used in \cite{kundu2019identifying} to compute robust distributed stability certificates on the inverter droop gains, under bounded and time-varying uncertainties. A computational method using the dissipativity approach was proposed in \cite{Vorobev2019decentralized} to estimate stability certificates in different frequency regions. The aforementioned works, with the exception of \cite{vorobev2017framework}, use the conventional quasi-static phasor models, while a dynamic phasor-based modeling approach (\cite{sanders1991generalized,guo2014dynamic,vorobev2017high}) is often better suited to capture the electromagnetic transient effects in the stability of microgrids. 
Finally, all of the above works assume a balanced system, while low-to-medium voltage microgrids are typically unbalanced. In summary, it can be observed that several key assumptions are made in the above-mentioned works, including lossless network (in \cite{simpson2013synchronization} and \cite{schiffer2014conditions}), absence of shunt elements (in \cite{vorobev2017framework}), quasi-static phasor models (all, except \cite{vorobev2017framework}), and balanced system (all), which limit their applicability to realistic microgrids in which those assumptions do not often hold. 

\subsection{Summary of Contributions}
In this work, a set of closed-form, distributed, parametric small-signal stability conditions, involving droop-controller parameters (e.g. droop gains) and network parameters (e.g. line impedances), are derived for inverter-based microgrids. The main contributions of this work lie in: 1) consideration of realistic network models, including three-phase unbalanced, lossy microgrids with shunt elements (loads, capacitors); 2) adoption of dynamic phasor-based approach to represent the electromagnetic transient effects in a microgrid; and 3) validation of the analytical stability conditions with PSCAD-based simulations, along with a study of the conservativeness of the analytical stability conditions for various conductor types, and inverter sizes and locations.

The rest of the article is structured as follows: Section \ref{sec:microgrid_model}  describes the microgrid modeling approach using dynamic phasors; Section \ref{sec:main_result}   develops the distributed stability certificates using LaSalle’s invariance principle; Section \ref{sec:results}  presents numerical results to validate the approach; while conclusions and future directions are presented in Section \ref{sec:conclusion}.  

\section{Description of Microgrid Model}
\label{sec:microgrid_model}
This section describes the full-scale electromagnetic model for microgrids with multi-loop droop-controlled inverters, and then presents a dynamic phasor-based reduced-order model.
\subsection{Full-Scale Electromagnetic Model}
In this paper, a network of $N$ grid-forming, droop-controlled inverters operated in islanded mode \cite{lasseter2011smart,denis2018migrate}, with constant-impedance loads is considered. Constant-power and constant-current loads can also be modeled in the form of their equivalent impedances when the variations in voltages are small \cite{vorobev2017high,bottrell2013dynamic}.

A grid-forming inverter is designed to operate as a voltage source regulating both the voltage magnitude and the frequency at the point of coupling to the network \cite{kroposki2017achieving}. The inverters are implemented with the commonly used multi-loop droop-control system (shown in Fig. \ref{fig:multi_loop_droop}) \cite{pogaku2007modeling,li2004design,guerrero2011hierarchical}, comprising the inner current loop and outer voltage loop \cite{vorobev2017framework}. The inner loop is designed to be faster than the outer one, which allows independent tuning of the inner and outer control loops. 
\begin{figure}[h!]
    \centering
    \vspace{-0.3 cm}
    \includegraphics[width = 0.85 \columnwidth]{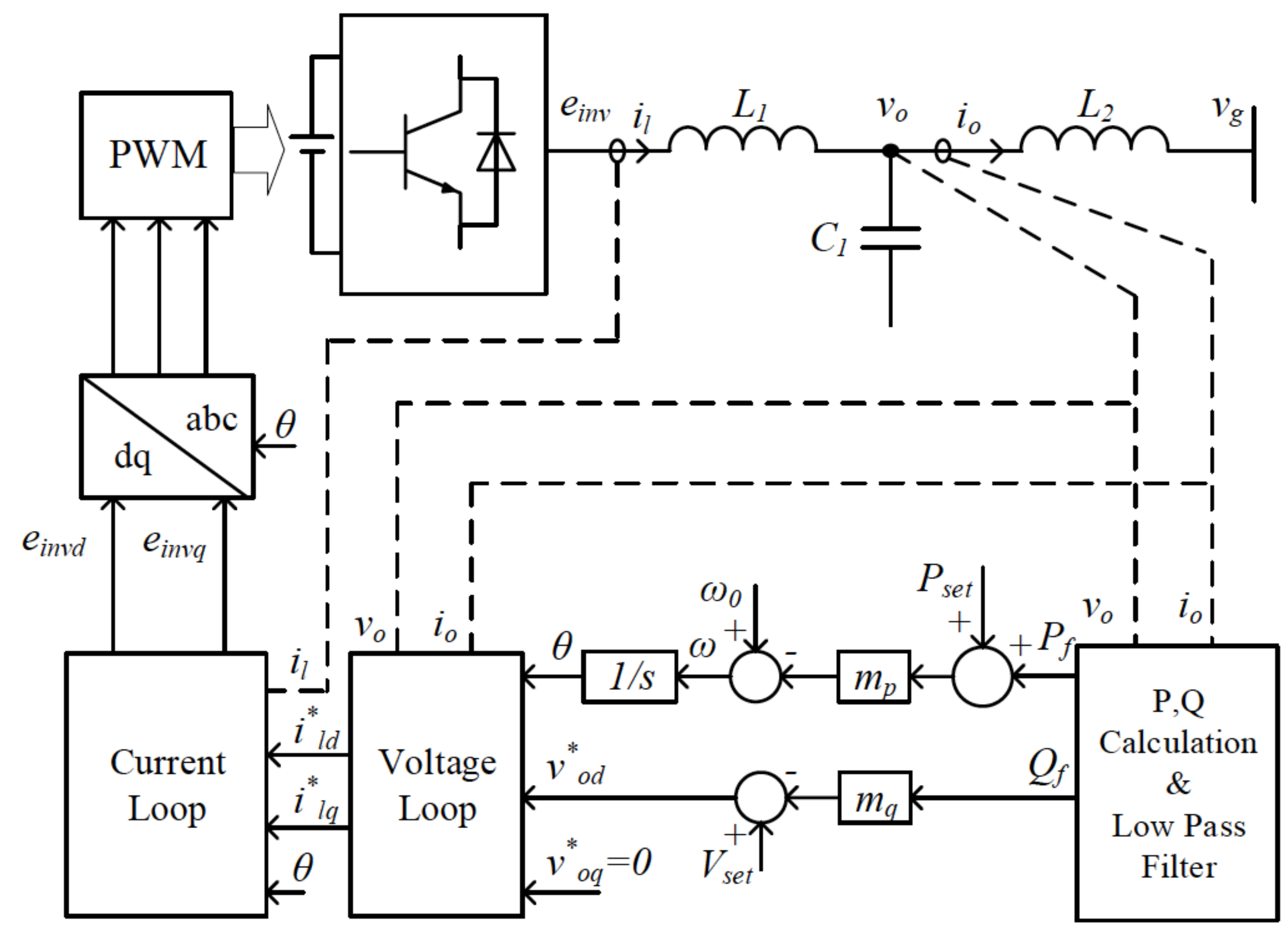}
    \caption{The controller block diagram of a multi-loop droop-controlled inverter.}
    \vspace{-0.3 cm}
    \label{fig:multi_loop_droop}
\end{figure}

The controller has a cascaded structure, including the power vs frequency (\textit{P-f}) and reactive power vs. voltage (\textit{Q-V}) droop-control loops, the voltage-control loop, and the current-control loop. This control strategy is designed to regulate the angular frequency and magnitude of the inverter filter capacitor voltage, $v_o\angle\delta$, according to the \textit{P-f} droop and \textit{Q-V} droop, respectively, such that the inverter may appear as an (almost) ideal voltage source at its terminal. In particular, the \textit{P-f} droop control ensures that the angular frequency $\omega$ of the capacitor voltage decreases whenever the output power of the inverter increases, or vice versa, due to some disturbance/changes in the network.

In contrast, due to an external disturbance, the \textit{Q-V} droop control ensures that the magnitude of the capacitor voltage decreases whenever the output reactive power increases, or vice versa. In Fig. \ref{fig:multi_loop_droop}, $m_p$  and $m_q$  are the \textit{P-f} and \textit{Q-V} droop coefficients, respectively; $P_f$ and $Q_f$ are, respectively, the filtered values of active and reactive power measurement (summed over all phases); $P_{set}$  and $Q_{set}$ are the active and reactive power set points, respectively; $\omega_o$ is the rated angular frequency; $V_{set}$  is the voltage magnitude set point, $e_{inv}$ is the inverter internal voltage, $v_{od}^*$ and $v_{oq}^*$ are the input references of the voltage loop, while $i_{ld}^*$ and $i_{lq}^*$ are the input references of the current loop. The reactive power set point, $Q_{set}$ is assumed to be zero and hence not seen in Fig. \ref{fig:multi_loop_droop}. In this work, it is assumed that the parameters of inner voltage and current loops are well designed to achieve fast control of the filter capacitor voltage. So the dynamics of inner controllers are ignored in this paper. Investigating how the parameters of inner controller affect the stability is out of the scope of this paper. The work in \cite{du2019comparative} investigated how inappropriate parameters of inner controllers affect the microgrid stability. Therefore, bypassing the faster internal states of the droop-controlled inverter, the system can be described using only the inverter terminal states (phase angle, frequency, and voltage amplitude) and the line currents as dynamic variables, as shown in  \cite{vorobev2017framework,vorobev2017high}. The dynamics of an inverter at bus $i$ can be written as follows:
\begin{equation}
    \begin{aligned}
    \dot{\delta}_i = & \omega_i - \omega_o \\
    \tau \dot{\omega}_i = & \omega_o - \omega_i + m_{p,i}\left(P_{set,i} - P_{f,i}\right) \\
    \tau \dot{v}_{o,i} = & v_{set} - x_{o,i} + m_{q,i} \left(Q_{set,i}-Q_{f,i}\right)
    \end{aligned}
    \label{eq:inverter_dynamics}
\end{equation}
where $\tau$ denotes the time constant of the measurement filter (low pass filter in Fig. \ref{fig:multi_loop_droop}). Note that the control algorithm ensures that the voltage at the inverter terminal appears as balanced, with the magnitude of each phase voltage being $v_{o,i}$ and the phases shifting uniformly by amount $\delta_i$. As discussed in \cite{vorobev2017high} and \cite{guo2014dynamic}, it is convenient to use a dynamic phasor modeling approach for small-signal stability analysis of inverter-based microgrids. Dynamic phasors are slow-varying Fourier coefficients describing the time-varying signal of interest \cite{sanders1991generalized}. Applying the dynamic phasor analysis around the nominal frequency (60 Hz), the three-phase line current dynamics corresponding to every pair of neighboring buses $(i,k)$ can be written as follows:
\begin{equation}
    L_{ik}^{abc} \dot{I}_{ik}^{abc} = V_i^{abc}- V_k^{abc} - Z_{ik}^{abc} I_{ik}^{abc}
    \label{eq:line_dynamics}
\end{equation}
where $Z_{ik}^{abc} = R_{ik}^{abc} + j\omega_o L_{ik}^{abc}$ is the $3\times 3$ matrix representing the line impedance; $R_{ik}^{abc}$ and $L_{ik}^{abc}$ are the line resistance and inductance matrices, respectively; $I_{ik}^{abc}$ is a (complex) vector that denotes the dynamic phasor of the line current across three phases; while $V_i^{abc}$ and $V_k^{abc}$ are three-dimensional (complex) vectors denoting the dynamic phasors of the voltages at the corresponding buses. The impedance matrices $R_{ik}^{abc}$ and $L_{ik}^{abc}$ are symmetric. Moreover, $R_{ik}^{abc}=R_{ki}^{abc}$ and $L_{ik}^{abc}=L_{ki}^{abc}$ for every $(i,k)$. Similar equations describe the three-phase unbalanced load current dynamics for every load attached to a phase $\alpha$ at bus $i$.
\begin{equation}
    L_{l,i}^{abc} \dot{I}_{l,i}^{abc} = V_i^{abc}- Z_{l,i}^{abc} I_{l,i}^{abc} 
    \label{eq:load_dynamics}
\end{equation}
where $Z_{l,i}^{abc}=R_{l,i}^{abc} + j \omega_o L_{l,i}^{abc}$ is the $3\times3$ matrix representing the (equivalent) impedance of the load attached at bus $i$; $R_{l,i}^{abc}$ and $L_{l,i}^{abc}$ are the (equivalent) load resistance and inductance, respectively; and $I_{l,i}^{abc}$ denotes the (complex) dynamic phasor of the load current across three phases.

Finally, the total active and reactive power flowing out of the inverter terminal across three phases at bus $i$ is given by
\begin{equation}
    \begin{aligned}
    P_{f,i} = & \mbox{Re}\left[\left(V_i^{abc}\right)^{\top} \cdot \mbox{conj}\left(I_i^{abc}\right)\right]  \\
    Q_{f,i} = & \mbox{Im}\left[\left(V_i^{abc}\right)^{\top} \cdot \mbox{conj}\left(I_i^{abc}\right)\right]  \\
    \end{aligned}
    \label{eq:power_equations}
\end{equation}
where $I_i^{abc} = I_{l,i}^{abc} + \sum_k I_{ik}^{abc}$ with $I_i^{abc}$ being the three-dimensional vector of the total current injected at three phases of bus $i$. Equations \eqref{eq:inverter_dynamics}-\eqref{eq:power_equations} describe the dynamical model of the microgrid network, resulting in a high-dimensional system. Singular perturbation theory was applied to further reduce the dimension of the microgrid model for analytical studies \cite{vorobev2017high,rasheduzzaman2015reduced,mariani2014model}.
\subsection{Reduced-Order Dynamic Model}
Electromagnetic transients that govern the line currents occur on the millisecond time scale, prompting a model reduction approach using singular perturbation theory in which the left-hand side of the line dynamics \eqref{eq:line_dynamics} is set to zero \cite{rasheduzzaman2015reduced,mariani2014model}. A model reduction approach such as this can be referred to as the zeroth-order simplification. The argument is that the inverter dynamics (governed by the measurement filter time constant $\tau_i$) is sufficiently slow that, for small-signal stability analysis, the line currents can be assumed to be in quasi-steady state. It was shown in  \cite{vorobev2017framework} and \cite{vorobev2017high} however, that zeroth-order simplification may lead to incorrect stability assessments for inverter-based microgrids. Instead, the authors in \cite{vorobev2017high} proposed a first-order simplification method allowing for improved accuracy in inclusion of the faster time scales in the slower dynamic modes. In particular, modes that are much slower than the electromagnetic transients are of interest. Transforming \eqref{eq:line_dynamics} and \eqref{eq:load_dynamics} into the Laplace domain, and performing first-order Taylor series approximation on the Laplace variable $s$ of the transfer functions from voltages to currents, the resultant equations are 
\begin{equation*}
    \begin{aligned}
    I_{ik}^{abc}(s) \approx & \left(Y_{ik}^{abc} - s Y_{ik}^{abc} L_{ik}^{abc} Y_{ik}^{abc} \right) \left(V_i^{abc}(s) - V_k^{abc}(s) \right) \\
    I_{l,i}^{abc}(s) \approx & \left(Y_{l,i}^{abc} - sY_{l,i}^{abc} L_{l,i}^{abc} Y_{l,i}^{abc}\right) V_i^{abc}(s)
    \end{aligned}
\end{equation*}
where the symmetric matrices, $Y_{ik}^{abc} = \left(Z_{ik}^{abc}\right)^{-1}$ and $Y_{l,i}^{abc} = \left(Z_{l,i}^{abc}\right)^{-1}$ represent the (equivalent) admittances of the line and the load, respectively. Setting $s=0$, the equations yield the quasi-steady-state relation between the voltages and currents. The relation between the voltages and injected currents at the buses across the network can be compactly written in the Laplace domain as follows. 
\begin{equation}
    \begin{bmatrix}
    I_1^{abc}(s) \\ I_2^{abc}(s) \\ \vdots \\ I_N^{abc}(s)  
    \end{bmatrix} \approx \left(Y_{net}^{abc}(s) + Y_{load}^{abc}(s)\right) \begin{bmatrix}
    V_1^{abc}(s) \\ V_2^{abc}(s) \\ \vdots \\ V_N^{abc}(s)  
    \end{bmatrix}
    \label{eq:Laplace_equations_compact}
\end{equation}
where $Y_{net}^{abc}(s)$ and $Y_{load}^{abc}(s)$ are symmetric, complex, $3N\times3N$ admittance matrices affine in the Laplace variable $s$, referred to as the network admittance matrix and the load admittance matrix, respectively. $Y_{load}^{abc}(s)$ is a diagonal matrix, with the entries of the $3\times3$ diagonal block that correspond to bus $i$ being $Y_{l,i}^{abc} - s Y_{l,i}^{abc} L_{l,i}^{abc} Y_{l,i}^{abc}$ whenever there is a load attached to it. $Y_{net}^{abc}(s)$ is a symmetric matrix with the entries of the $3\times3$ off-diagonal blocks corresponding to any pair of neighboring buses $(i,k)$ being $s Y_{ik}^{abc} L_{ik}^{abc} Y_{ik}^{abc} - Y_{ik}^{abc}$, while the diagonal block corresponding to bus $i$ is the negative of the sum of the off-diagonal blocks for every neighboring pair $(i,k)$. Recall that the inverter terminal voltages are balanced, i.e., the three-phase voltages have the same magnitude and are $120^{o}$ apart from each other. In other words, $V_i^b(s) = t V_i^a(s)$ and $V_i^c(s) = t^2 V_i^a(s)$, with $t = e^{-j2\pi/3}$. Therefore, 
\begin{equation}
    \begin{bmatrix}
    V_{o,1}^{abc}(s) \\ V_{o,2}^{abc}(s) \\ \vdots \\ V_{o,N}^{abc}(s) 
    \end{bmatrix}
    = \Theta \begin{bmatrix}
    V_{o,1}^{a}(s) \\ V_{o,2}^{a}(s) \\ \vdots \\ V_{o,N}^{a}(s)\end{bmatrix}, \quad \Theta = \pmb{I}_N \otimes \begin{bmatrix} 1 \\ t \\ t^2 \end{bmatrix}
    \label{eq:3_1_trans}
\end{equation}
where $\otimes$ represents a Kronecker product and $\pmb{I}_N$ is an $N\times N$ identity matrix. Let us define the following $N\times N$ matrices. 
\begin{equation}
    \begin{aligned}
    Y_{net}(s) := & \Theta^H \cdot Y_{net}^{abc}(s) \cdot \Theta = Y_{net}^0 + s Y_{net}^1 \\
    \mbox{and} Y_{load}(s) := & \Theta^H \cdot Y_{load}^{abc}(s) \cdot \Theta = Y_{load}^0 + s Y_{load}^1 \\
    \end{aligned}
    \label{eq:Y_transf}
\end{equation}
where $Y_{net}^0,Y_{net}^1,Y_{load}^0,Y_{load}^1$ are complex $N\times N$ matrices, and $(\cdot)^H$ denotes the Hermitian of a matrix. 

Linearizing the system of equations around an operating point of interest that corresponds to small phase angle differences between the buses and (close to) 1 p.u. bus voltage magnitudes, the microgrid model can be expressed by the following set of differential equations (see \cite{vorobev2017high} for details).
\begin{equation}
    \begin{aligned}
    \tau {\Lambda}_p \Ddot{\phi} + ({\Lambda}_p - {B}^{\prime}) \dot{\phi} + {B}\phi + ({G}+\tilde{{G}} )\rho - {G}^{\prime} \dot{\rho} & = 0 \\
    (\tau {\Lambda}_q - {B}^{\prime}) \dot{\rho} + ({\Lambda}_q + {B} + \tilde{{B}}) \rho - {G} \phi + {G}^{\prime} \dot{\phi} & = 0
    \end{aligned}
    \label{eq:linear_microgrid}
\end{equation}
where $\phi$ and $\rho$ are the $N$-dimensional vectors of deviations of the bus voltage angles and magnitudes, respectively, from their nominal operating point values; ${\Lambda}_p$ and ${\Lambda}_q$ are $N \times N$ diagonal matrices with diagonal entries equal to the inverse of the \textit{P-f} and \textit{Q-V} droop coefficients, respectively; and the $N \times N$ complex matrices $B,G,\tilde{B}, \tilde{G}, {B}^{\prime}$ and ${G}^{\prime}$ are given by
\begin{equation}
  \begin{aligned}
  B = -\mbox{Im}\left[Y_{net}^0\right], &\; G = \mbox{Re}\left[Y_{net}^0\right]\\
  \tilde{B} = -2\mbox{Im}\left[Y_{load}^0\right], &\; \tilde{G} = 2\mbox{Re}\left[Y_{load}^0\right]\\
  B^{\prime} = \mbox{Im}\left[Y_{net}^1 + Y_{load}^1\right], &\; G^{\prime} =  -\mbox{Re}\left[Y_{net}^1 + Y_{load}^1\right]
  \end{aligned}
  \label{eq:matrices_definition}
\end{equation}
As per definition, (1) $B$ and $G$ are positive semidefinite matrices (denoted as $B \succeq 0, G \succeq 0$); (2) $\tilde{B}$ and $\tilde{G}$ are diagonal, positive-definite matrices (denoted by $\tilde{B}\succ 0, \tilde{G} \succ 0$); while (3) $B^{\prime}$ is a positive-definite matrix ($B^{\prime} \succ 0$) and $G^{\prime}$ is a symmetric (but sign-indefinite) matrix. Finally, with the assumption that $\Lambda_q-B^{\prime}/\tau$ to be nonsingular, the linearized system \eqref{eq:linear_microgrid} has an unique equilibrium point at origin. Equations \eqref{eq:linear_microgrid} represent the reduced-order dynamical model for the unbalanced microgrid network, capable of accurately describing the dynamic modes that evolve on time scales slower than the electromagnetic time scale (see \cite{vorobev2017high} for detailed comparison). In this work, the reduced model of \eqref{eq:linear_microgrid} will be used for the analytical calculations of stability conditions, while the results will be validated with full-scale electromagnetic models. 

\section{Main Result: Parametric Stability Conditions}
\label{sec:main_result}
Note that the system of equations in \eqref{eq:linear_microgrid} can be written in the form of a linear, time-invariant dynamical system. Small-signal stability of the microgrid network can be guaranteed by computing the eigenvalues of the system matrix and ensuring that those lie on the left half-plane. However, such a method of determining stability via explicit enumeration of eigenvalues is computationally inefficient if the purpose is to identify a range of system parameter values for which the network is small-signal stable; the computational time grows exponentially with the increase in the number of parameters of interest (such as network impedances, loads, generation set points, droop coefficient values, etc.). Moreover, rather than an explicit computation of the parametric stability regions, a closed-form estimation of an inner approximation of the stability region is often more desirable. This is due to the facts that (1) closed-form expressions can be used as constraints in an economic optimization problem where design decisions are being made to minimize investment and operational costs; and (2) inner approximation of the stability region provides a natural safeguard against modeling inaccuracies. Lyapunov-based analysis provides a tractable alternative to eigenvalues analysis in estimating the parametric stability region.
\subsection{Lyapunov Reformulation of Stability Conditions}
Consider the system of \eqref{eq:linear_microgrid}, which can be written in state-space form with the help of the $3N$-dimensional states vector $x = \begin{bmatrix} \phi^{\top},  \rho^{\top},  \dot{\phi}^{\top} \end{bmatrix}^{\top}$. Applying LaSalle’s invariance principle \cite{slotine1991applied}, it can be argued that if there exist $3N\times 3N$ matrices $\tilde{\Psi} \succ \pmb{0}$ (positive definite) and $\tilde{\Pi} \succeq \pmb{0}$ (positive semidefinite) satisfying 
\begin{equation}
    W(x) ;= x^{\top} \tilde{\Psi} x \implies \dot{W}(x) \leq -x^{\top} \tilde{\Pi} x \leq 0, 
    \label{eq:Lyapunov_func_derivative}
\end{equation}
then the system trajectories are guaranteed to asymptotically converge to the largest invariant set contained in the set $\dot{W}(x) = 0$ which is the origin itself as the linearized system \eqref{eq:linear_microgrid} has a unique equilibrium at origin. That is to say the microgrid is small-signal stable. $W(x)$ is a Lyapunov function that is radially unbounded and positive definite. Finding such a pair of matrices $\tilde{\Psi}$ and $\tilde{\Pi}$ is generally difficult. Note, however, that if there exists a positive-definite matrix $\Psi (\Psi \succ \pmb{0})$ and a positive semidefinite matrix $\Pi (\Pi \succeq \pmb{0})$, as well as transformation matrices, $T_1$ and $T_2$, with $T_1$ being full rank, such that \eqref{eq:Lyap_derivative_transformed} holds, 
\begin{equation}
    W = y^{\top} \Psi y \implies \dot{W} \leq -z^{\top} \Pi z \leq 0
    \label{eq:Lyap_derivative_transformed}
\end{equation}
where $y = T_1 x$ and $z=T_2 x$, then the microgrid described by \eqref{eq:linear_microgrid} is small-signal stable. This is because the conditions \eqref{eq:Lyap_derivative_transformed} translate to \eqref{eq:Lyapunov_func_derivative} by defining $\tilde{\Psi} = T_1^{\top} \Psi T_1 \succ \pmb{0}$ (since $T_1$ is full rank) and $\tilde{\Pi} = T_2^{\top} \Pi T_2 \succeq \pmb{0}$. Before moving on to deriving the stability conditions, let us note that, if $A$ is a positive semidefinite matrix, then the following inequality holds for every pair of real-valued vectors $p$ and $q$:
\begin{equation}
  2 p^{\top} A q \geq -p^{\top} A p-q^{\top} A q.    
  \label{eq:quad_ineq}  
\end{equation}
%
\subsection{Conditions for Existence of $\Psi>0$ and $\Pi \geq 0$}
In a similar approach to that in \cite{vorobev2017framework}, let us choose the following transformation matrices.
\begin{equation}
    \begin{aligned}
    T_1 = & \begin{bmatrix} \pmb{I}_{3N} & \pmb{0} & \pmb{0} \\ \pmb{0} & \pmb{I}_{3N} & \pmb{0} \\  \pmb{I}_{3N} & \pmb{0} & 2 \tau \pmb{I}_{3N}  \end{bmatrix} \\
    \mbox{and}\; T_2 = & \begin{bmatrix} \pmb{I}_{3N} & \pmb{0} & \pmb{0} \\
    \Gamma(\Lambda_q + B + \tilde{B}) & -\Gamma G & \Gamma G^{\prime} \\
    \pmb{0} & \pmb{I}_{3N} & \pmb{0} \\
    \pmb{0} & \pmb{0} & \tau \pmb{I}_{3N}\end{bmatrix}
    \end{aligned}
    \label{eq:transformations}
\end{equation}
where $\Gamma = \left(B^{\prime}/\tau - \Lambda_q\right)^{-1}$, $\pmb{I}_{3N}$ is the $3N\times 3N$ identity matrix, and $\pmb{0}$ denotes matrices of zeros. Clearly, the matrix $T_1$ is full rank. Using \eqref{eq:Lyap_derivative_transformed} and \eqref{eq:transformations}, a Lyapunov function candidate can be obtained as $W=y^{\top}\Psi y$ where $y=T_1 x$ is the transformed state vector and $\Psi$ is a symmetric block-diagonal matrix defined as
\begin{equation}
    \begin{aligned}
        \Psi = & \begin{bmatrix} \Psi_{11} & \pmb{0} & \pmb{0} \\\pmb{0} & \Psi_{22} & \pmb{0} \\ \pmb{0} & \pmb{0} & \Psi_{33} \end{bmatrix} \\
        \mbox{where}\; \Psi_{11} = & \Lambda_p/2-B^{\prime} + 2 \tau B, \\
        \Psi_{22} = & 3 \tau \Lambda_q - B^{\prime} + 2 \tau (B+\tilde{B}), \\
        \Psi_{33} = & \Lambda_p/2.
    \end{aligned}
    \label{eq:psi_matrix}
\end{equation}
Note that the diagonal block $\Psi_{33}$ is positive definite since $\Lambda_p$ is a diagonal matrix with positive entries. Therefore, the condition that $\Psi \succ 0$ is equivalent to saying that both $\Psi_{11} \succ 0$ and $\Psi_{22} \succ 0$. As the next step, the time derivative ($\dot{W}$) of the Lyapunov function candidate along the trajectories of the system is computed. After some algebraic manipulation involving (1) multiplication of the first set of equations in \eqref{eq:linear_microgrid} separately by $\phi^{\top}$ and $2\tau \dot{\phi}^{\top}$; (2) multiplication of the second set of equations in \eqref{eq:linear_microgrid} separately by $\rho^{\top}$ and $2\tau \dot{\rho}^{\top}$ (3) addition of the resulting equations; and (4) rearrangement of some of the terms, the time derivative of the candidate Lyapunov function is given by 
\begin{equation}
    \begin{aligned}
        \dot{W} = & \frac{d}{dt} y^{\top} \Psi y = 2 y^{\top} \Psi\dot{y} = -2z^{\top} \hat{\Pi}z \\
        \mbox{where}\; \hat{\Pi} = &\begin{bmatrix}\hat{\Pi}_{11} & \hat{\Pi}_{12} & \hat{\Pi}_{13} & \pmb{0}\\ \hat{\Pi}_{12} & \hat{\Pi}_{22} & \pmb{0} & \pmb{0} \\ \hat{\Pi}_{13} & \pmb{0} & \hat{\Pi}_{33} & \hat{\Pi}_{34} \\ \pmb{0} & \pmb{0} & \hat{\Pi}_{34} & \hat{\Pi}_{44}\end{bmatrix} \\
        \hat{\Pi}_{11} = &B, \; \hat{\Pi}_{12} = -G-G^{\prime}/(2\tau), \; \hat{\Pi}_{13} = \tilde{G}/2 \\
        \hat{\Pi}_{22} = &2\Lambda_q-2B^{\prime}/\tau, \; \hat{\Pi}_{33} = \Lambda_q+B+\tilde{B} \\
        \hat{\Pi}_{34} = &G^{\prime}/(2\tau)+G+\tilde{G},\; \hat{\Pi}_{44} = (\Lambda_p-2B^{\prime})/\tau.
    \end{aligned}
    \label{eq:Lyap_derivative}
\end{equation}
Recalling \eqref{eq:matrices_definition}, $\hat{\Pi}_{13}\succeq \pmb{0}$. Moreover, it is assumed that $\hat{\Pi}_{34} \succeq \pmb{0}$; (recall that $G+\tilde{G} \succ \pmb{0}$ by definition, while $G^{\prime}$ is sign-indefinite). Applying the inequality \eqref{eq:quad_ineq} to $\hat{\Pi}$, we obtain
\begin{equation}
   \resizebox{0.09\hsize}{!}{%
   $\hat{\Pi} \succeq \Pi $
   }
    := 
    \resizebox{0.74\hsize}{!}{%
    $\begin{bmatrix} \begin{bmatrix} \hat{\Pi}_{11}-\hat{\Pi}_{13} & \hat{\Pi}_{12} \\ \hat{\Pi}_{12} & \hat{\Pi}_{22} \end{bmatrix} & \pmb{0} & \pmb{0} \\
    \pmb{0} & \hat{\Pi}_{33}-\hat{\Pi}_{13}-\hat{\Pi}_{34} & \pmb{0} \\ \pmb{0} & \pmb{0} & \hat{\Pi}_{44}-\hat{\Pi}_{34} \end{bmatrix}$
    }
    \label{eq:pi_ineq}
\end{equation}
where $\Pi$ has a block-diagonal structure. Therefore, $\Pi \succeq \pmb{0}$ if each of the diagonal blocks is also positive semidefinite. The conditions for existence of $\Psi \succ \pmb{0}$ and $\Pi \succeq \pmb{0}$ can therefore be summarized as
\begin{equation}
    \begin{aligned}
        & \Psi_{11} \succ \pmb{0}, \; \Psi_{22} \succ \pmb{0}, \; \begin{bmatrix} \hat{\Pi}_{11}-\hat{\Pi}_{13} & \hat{\Pi}_{12} \\ \hat{\Pi}_{12} & \hat{\Pi}_{22} \end{bmatrix} \succeq \pmb{0} \\
        & \hat{\Pi}_{33} - \hat{\Pi}_{13} - \hat{\Pi}_{34} \succeq \pmb{0} \;\; \mbox{and} \;\; \hat{\Pi}_{44} - \hat{\Pi}_{34} \succeq \pmb{0}.
    \end{aligned}
    \label{eq:psi_definition}
\end{equation}

However, algorithms to check these matrix conditions (such as Cholesky decomposition \cite{householder2013theory}) scale poorly with the size of the network. Moreover, having such network-wide conditions is not entirely desirable, since even local changes in the network (e.g., a new inverter, change in load, or a change in line) require the conditions to be reevaluated. It is, therefore, both efficient and useful to split the matrix inequalities into several sub-inequalities, each corresponding to a pair of inverter buses, similar to \cite{vorobev2017framework}. In particular, for each of the $N\times N$ matrices $\Lambda_p, \Lambda_q, B, G, \tilde{B}, \tilde{G}, B^{\prime}$ and $G^{\prime}$, and for every neighboring pair of inverter buses $(i,k)$, the following matrices are defined:
\begin{equation}
    \left[X\right]_{ik} := \begin{bmatrix} \frac{X_{ii}+\sum_{j=1}^{n_i} X_{ij}}{n_i} - X_{ik} & X_{ik} \\ X_{ki} & \frac{X_{kk}+\sum_{j=1}^{n_k} X_{kj}}{n_k} - X_{ki} \end{bmatrix}
    \label{eq:matrix_partition}
\end{equation}
where X is used to represent any of $\Lambda_p, \Lambda_q, B, G, \tilde{B}, \tilde{G}, B^{\prime}$ and $G^{\prime}$; $X_{ik}$ (for every $i$ and $k$) denotes the $i$-th row and $k$-th column entry in $X$; $n_i$ and $n_k$ denote the number of neighbors of bus $i$ and bus $k$, respectively. Finally, the conditions of small-signal stability of the microgrid are summarized as follows:
\begin{theorem}
\label{thm:main}
The microgrid network described by \eqref{eq:linear_microgrid} is small-signal stable if the following inequalities hold for every pair of neighboring buses $(i,k)$:
\begin{equation}
    \begin{aligned}
        & [G]_{ik} + [\tilde{G}]_{ik} + [G^{\prime}]_{ik}/2\tau \succeq \pmb{0} \\
        & [\Lambda_p]_{ik}/2 - [B^{\prime}]_{ik} + 2\tau[B]_{ik} \succ \pmb{0} \\
        & [\Lambda_p]_{ik} - 2 [B^{\prime}]_{ik} - [G^{\prime}]_{ik}/2 - \tau [G]_{ik} - \tau [\tilde{G}]_{ik} \succeq \pmb{0} \\
        & [\Lambda_q]_{ik} + [B]_{ik} + [\tilde{B}]_{ik} \succeq \frac{3}{2} [\tilde{G}]_{ik} + \frac{1}{2\tau} [G^{\prime}]_{ik} + [G]_{ik} \\
        & [\Lambda_q]_{ik} - [B]_{ik}/\tau \succ \pmb{0} \; \mbox{and} \\
        & 2 [B]_{ik} - [\tilde{G}]_{ik} \succeq \\
        &  
           \resizebox{0.85\hsize}{!}{%
    $\left([G]_{ik} + \frac{[G^{\prime}]_{ik}}{2\tau}  \right) \left([\Lambda_q]_{ik}  - \frac{[B^{\prime}]_{ik} }{\tau} \right)^{-1} \left([G]_{ik} + \frac{[G^{\prime}]_{ik}}{2\tau} \right)$
    }
    \end{aligned}
    \label{eq:stability_conditions}
\end{equation}
\end{theorem}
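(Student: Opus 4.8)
The plan is to complete the LaSalle-based argument that the excerpt has already set up. By \eqref{eq:Lyap_derivative_transformed} and the discussion that follows it, small-signal stability is reduced to the existence of a pair $\Psi\succ\pmb0$, $\Pi\succeq\pmb0$; the preceding derivation has already produced the explicit time derivative $\dot W=-2z^{\top}\hat\Pi z$ with $z=T_2x$ and $\hat\Pi$ as in \eqref{eq:Lyap_derivative}, used \eqref{eq:quad_ineq} on the couplings $\hat\Pi_{13}$ and $\hat\Pi_{34}$ to obtain the block-diagonal lower bound $\Pi$ of \eqref{eq:pi_ineq}, and collected the centralized feasibility requirements \eqref{eq:psi_definition}. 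Consequently the only genuinely new content in proving the theorem is to show that the pairwise inequalities \eqref{eq:stability_conditions} imply the centralized matrix inequalities \eqref{eq:psi_definition}; that is where I would concentrate the work. I would first record that $\Psi\succ\pmb0$ reduces to $\Psi_{11},\Psi_{22}\succ\pmb0$ because $\Psi_{33}=\Lambda_p/2\succ\pmb0$, and that $\Pi\succeq\pmb0$ reduces to positive-semidefiniteness of its three diagonal blocks.

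The heart of the argument is a globalization-to-localization identity. Every matrix appearing in \eqref{eq:psi_definition} is a fixed real linear combination of $\Lambda_p,\Lambda_q,B,G,\tilde B,\tilde G,B^{\prime},G^{\prime}$, and each of these eight matrices has off-diagonal support confined to neighboring pairs (the network parts of $B,G,B^{\prime},G^{\prime}$ inherit the Laplacian row-sum structure, while $\Lambda_p,\Lambda_q,\tilde B,\tilde G$ and the load parts are diagonal). For any symmetric $X$ of this kind I would verify by a direct entry count that $X=\sum_{(i,k)}\iota_{ik}([X]_{ik})$, where $[X]_{ik}$ is the $2\times2$ block of \eqref{eq:matrix_partition} and $\iota_{ik}$ embeds it back into the index pair $\{i,k\}$: each off-diagonal $X_{ik}$ is contributed exactly once, while the fractional diagonal terms $(X_{ii}+\sum_j X_{ij})/n_i-X_{ik}$ telescope over the $n_i$ edges incident to $i$ to return $X_{ii}$. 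Since $[\,\cdot\,]_{ik}$ is linear in $X$, this decomposition passes verbatim to each linear-combination matrix $M$ in \eqref{eq:psi_definition}, giving $M=\sum_{(i,k)}\iota_{ik}([M]_{ik})$. Hence $[M]_{ik}\succeq\pmb0$ for all neighboring pairs forces $M\succeq\pmb0$ as a sum of positive-semidefinite matrices, and (using connectedness of the microgrid graph) a pairwise strict inequality lifts to a global strict one. Matching coefficients then identifies the first four inequalities of \eqref{eq:stability_conditions} with $[\hat\Pi_{34}]_{ik}\succeq\pmb0$, $[\Psi_{11}]_{ik}\succ\pmb0$, $\tau^{-1}$-cleared $[\hat\Pi_{44}-\hat\Pi_{34}]_{ik}\succeq\pmb0$, and $[\hat\Pi_{33}-\hat\Pi_{13}-\hat\Pi_{34}]_{ik}\succeq\pmb0$, while $\Psi_{22}\succ\pmb0$ follows from the fifth condition guaranteeing $[\Lambda_q]_{ik}-[B^{\prime}]_{ik}/\tau\succ\pmb0$ together with $\Lambda_q,\tilde B\succ\pmb0$ and $B\succeq\pmb0$.

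The remaining centralized requirement is the coupled block $\begin{bmatrix}\hat\Pi_{11}-\hat\Pi_{13}&\hat\Pi_{12}\\\hat\Pi_{12}&\hat\Pi_{22}\end{bmatrix}\succeq\pmb0$. Here I would apply the same edge-decomposition blockwise to each of the four constituent matrices and then take a Schur complement on the lower-right block $[\Lambda_q]_{ik}-[B^{\prime}]_{ik}/\tau$ (positive definite by the fifth condition). Writing $\hat\Pi_{11}-\hat\Pi_{13}=B-\tilde G/2$, $\hat\Pi_{12}=-(G+G^{\prime}/(2\tau))$, and $\hat\Pi_{22}=2(\Lambda_q-B^{\prime}/\tau)$, the Schur condition is exactly $2[B]_{ik}-[\tilde G]_{ik}\succeq([G]_{ik}+[G^{\prime}]_{ik}/(2\tau))([\Lambda_q]_{ik}-[B^{\prime}]_{ik}/\tau)^{-1}([G]_{ik}+[G^{\prime}]_{ik}/(2\tau))$, which is the final inequality of \eqref{eq:stability_conditions}.

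The main obstacle I anticipate is justifying the blockwise edge-decomposition for this coupled $2\times2$-block matrix rather than for a single matrix: one must check that restricting the embedded summand for edge $(i,k)$ to the coordinates $\{i,k\}$ of \emph{both} blocks yields, up to a permutation, precisely the $4\times4$ pairwise matrix whose positive-semidefiniteness \eqref{eq:stability_conditions} asserts, so that taking the Schur complement locally is equivalent to local feasibility of that $4\times4$ block. A secondary point demanding care is confirming that the network parts of $B^{\prime}$ and $G^{\prime}$ genuinely share the edge-support/row-sum structure needed for the telescoping, and that the sign-indefiniteness of $G^{\prime}$ causes no trouble, since $G^{\prime}$ enters only inside differences that the pairwise conditions keep positive-semidefinite. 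The explicit $\dot W$ computation and the coefficient bookkeeping between \eqref{eq:psi_definition} and \eqref{eq:stability_conditions} are then routine.
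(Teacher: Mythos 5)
Your proposal is correct and takes essentially the same route as the paper's proof: the same LaSalle/Lyapunov machinery, the same edge-wise decomposition of $\Lambda_p,\Lambda_q,B,G,\tilde{B},\tilde{G},B^{\prime},G^{\prime}$ into the pairwise blocks \eqref{eq:matrix_partition} (whose telescoping identity you verify where the paper merely asserts it), the same Schur complement handling of the coupled block, and the same coefficient matching against \eqref{eq:psi_definition}. Two refinements in your version are worth noting but do not change the approach: decomposing blockwise \emph{before} taking the local Schur complement sidesteps the fact that the global Schur term $\hat{\Pi}_{12}\hat{\Pi}_{22}^{-1}\hat{\Pi}_{12}$ in the paper's ordering is not edge-sparse, and your reading of the fifth condition as $[\Lambda_q]_{ik}-[B^{\prime}]_{ik}/\tau \succ \pmb{0}$ matches the proof's requirement $\hat{\Pi}_{22}=2\Lambda_q-2B^{\prime}/\tau \succ \pmb{0}$ (the $[B]_{ik}$ appearing in that line of \eqref{eq:stability_conditions} is evidently a typo for $[B^{\prime}]_{ik}$).
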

\begin{proof}
Recall that the conditions in \eqref{eq:psi_definition}, together with the condition that $\hat{\Pi}_{34} = G^{\prime}/(2\tau)+G+\tilde{G}\succeq \pmb{0}$, are sufficient for $\Psi \succ \pmb{0}$ and $\Pi \succeq \pmb{0}$. Also recall that $\hat{\Pi}_{22} = 2\Lambda_q-2B^{\prime}/\tau$ needs to be nonsingular for the existence of the transformation matrix $T_2$. Therefore, taking Schur's complement, the condition 
\[\begin{bmatrix} \hat{\Pi}_{11}-\hat{\Pi}_{13} & \hat{\Pi}_{12} \\ \hat{\Pi}_{12} & \hat{\Pi}_{22} \end{bmatrix} \succeq \pmb{0} \] 
is expressed as two matrix inequalities: $\hat{\Pi} \succ \pmb{0}$ and $\hat{\Pi}_{11} - \hat{\Pi}_{13} - \hat{\Pi}_{12} \hat{\Pi}_{22}^{-1} \hat{\Pi}_{12} \succeq \pmb{0}$. Moreover, with a rearrangement of terms, it is easy to see that $\hat{\Pi}_{22} \succ \pmb{0}$ also implies $\Psi_{22} \succ \pmb{0}$. Finally, note that the matrices $\Lambda_p, \Lambda_q, B, G, \tilde{B}, \tilde{G}, B^{\prime}$ and $G^{\prime}$ are either diagonal or have the same sparsity structure as the physical network. Therefore, each of those can be expressed as a sum of $M$ ($M$ is the number of lines) sparse matrices, each corresponding to a neighboring pair of inverter buses, where the only nonzero entries correspond to the rows and columns of the corresponding pair of buses. Enforcing the positive (semi)definiteness on each of the constituent submatrices corresponding to every pair of neighboring buses $(i,k)$, the stricter stability conditions in \eqref{eq:stability_conditions} are obtained, having a computational complexity that scales linearly with the number of neighboring pairs of inverter buses.						
\end{proof}

Note that the stability conditions for any pair of non-inverter buses in Eq. \eqref{eq:stability_conditions} are of size $6\times 6$ and for any pair of inverter buses, they are of size $2\times 2$. In particular, a $2\times 2$ matrix is positive definite (or, semidefinite) if its determinant and either of the diagonal entries are positive (or, nonnegative). Thus, the matrix conditions in \eqref{eq:stability_conditions} for a pair of inverter buses yield  (efficiently verifiable) algebraic conditions on the network and inverter droop parameters, as demonstrated via numerical examples. 

\subsection{Sensitivity}
\label{sec:sensitivity}
The inequality conditions in \eqref{eq:stability_conditions} could be analyzed to derive sensitivity relations between the stability boundary (in droop parameters) and network parameters (e.g. line impedance values). In order to see some of those relations analytically, consider the specific scenario when the line resistance and reactance values are varied proportionally, by some positive scaling factor $\alpha$. In such a case, the matrices $B, G, \tilde{B}, \tilde{G}, B^{\prime}$ and $G^{\prime}$ are scaled by the factor $1/\alpha$. Therefore, all of the stability conditions in \eqref{eq:stability_conditions} would be satisfied if the matrices $\Lambda_p$ and $\Lambda_q$ are also scaled by $1/\alpha$, which implies scaling the droop coefficients by the factor $\alpha$. Thus, it can be argued that an increase in the line impedance values results in larger estimates of the stability region in droop coefficients, and \textit{vice versa}. This analytical observation is also validated via numerical simulations as presented in the next section.

\section{Numerical Results}
\label{sec:results}
In this section, the proposed approach is illustrated on two different microgrid systems: a two-bus (balanced) system, and the (modified) IEEE 34-bus system. The two-bus system is used to demonstrate how the stability region changes in presence of loads; while the IEEE 34-bus system is used to validate the analytical results on realistic network models and to investigate the effects of various design choices (e.g. conductor types, inverter sizes) on the stability regions. 
\begin{figure}[h!]
    \centering
    \includegraphics[scale = 0.51]{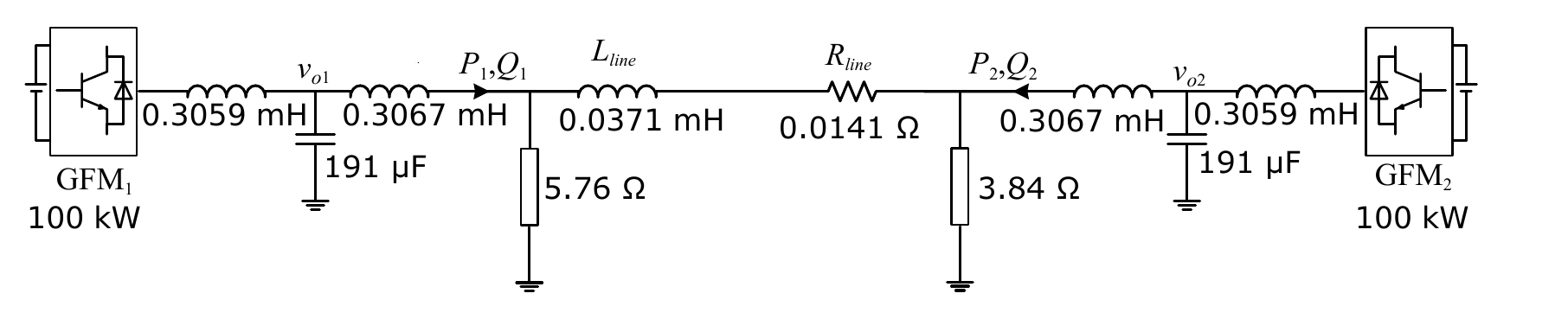}
    \caption{Two bus system with GFMs and loads.}
    \label{fig:two_bus_sys}
\end{figure}

The two-bus system consists of two grid-forming inverters (denoted by `GFMs') as shown in Fig.  \ref{fig:two_bus_sys}. The stability regions with respect to the inverters' \textit{P-f} and \textit{Q-V} droop parameters are generated for two cases: with and without loads, using the analytical stability conditions from Theorem \ref{thm:main} and \cite{vorobev2017framework}, respectively.

In Fig. \ref{fig:two_bus_simulations}, the plots in dotted lines are obtained using analytical stability conditions and the solid lines are obtained from eigenvalue analysis. Fig. \ref{fig:two_bus_simulations} shows that the stability region with loads will be smaller than the stability region without loads. It is clear that the estimated stability boundary without load is not completely contained in the eigenvalue-based stability boundary with load. This leads to an overestimate of the stability regions and might result in incorrect conclusions.
\begin{figure}[h!]
    \centering
    \includegraphics[width = 0.9 \columnwidth]{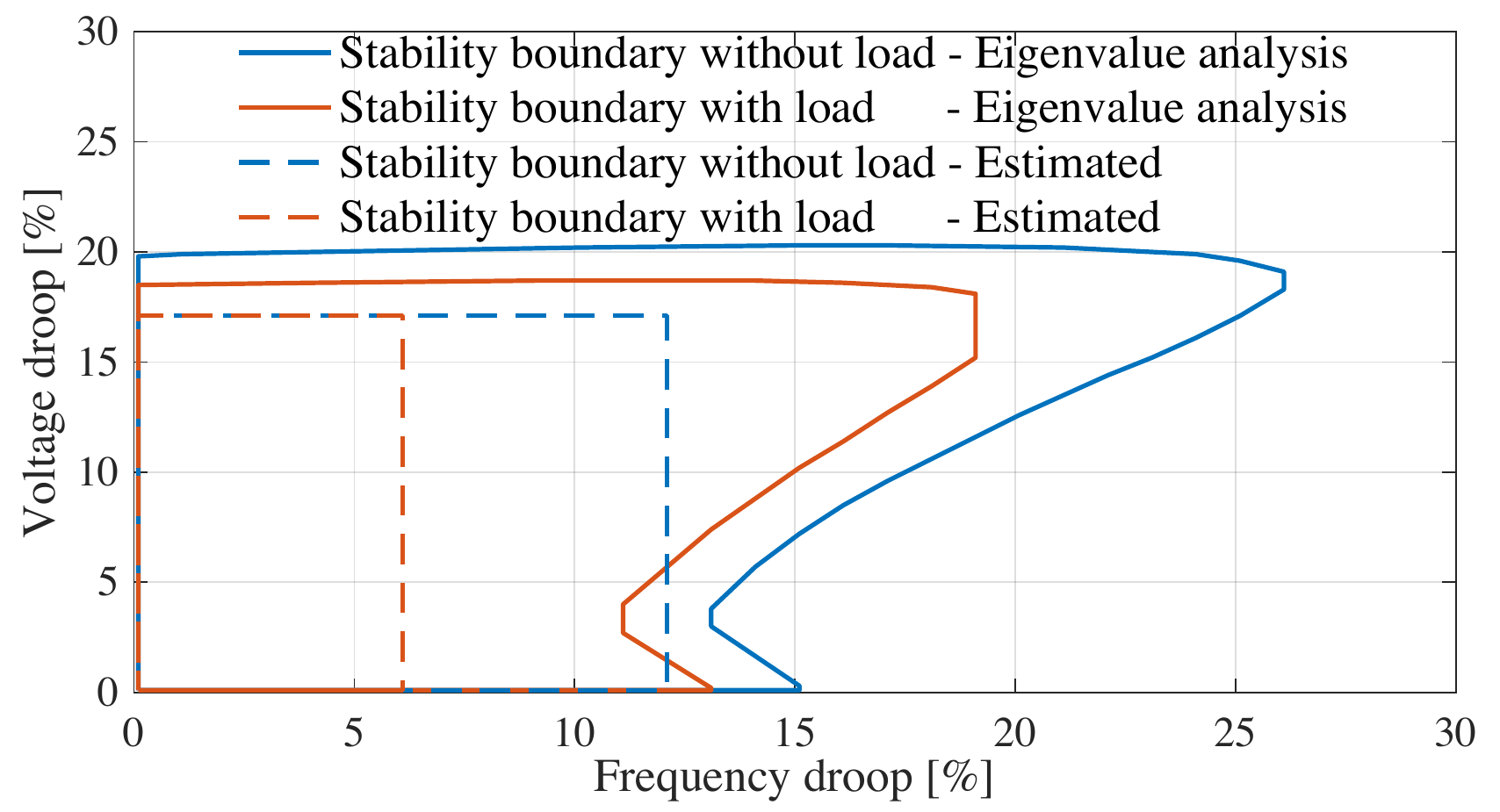}
    \caption{Eigenvalue-based and estimated stability boundaries for a two-bus system considered with and without loads.}
    \label{fig:two_bus_simulations}
\end{figure}

The proposed methodology to study small-signal stability is next illustrated on the IEEE 34-bus system, modified to incorporate inverters, creating a microgrid. In particular, the three-phase unbalanced 34-bus system is considered with 10 GFMs. The one-line diagram for the 34-bus system with GFMs is shown in Fig. \ref{fig:34bus_system}. The line between Buses 800 and 802 is not considered here, as it is assumed that there are inverters installed to meet the load demand of the 34-bus system (i.e., it is an islanded system and not connected to a bulk power system). The network and load data for the 34-bus system are obtained from \cite{ieeetestfeeder}. The locations and sizes of the inverters are 
as follows - Bus 806: 300 kW, Bus 816: 300 kW, Bus 828: 200 kW, Bus 830: 150 kW, Bus 836: 60 kW, Bus 840: 300 kW, Bus 842: 50 kW, Bus 844: 100 kW, Bus 846: 120 kW, and Bus 860: 150 kW. 
\begin{figure}[h!]
    \centering
    \includegraphics[width = 1 \columnwidth]{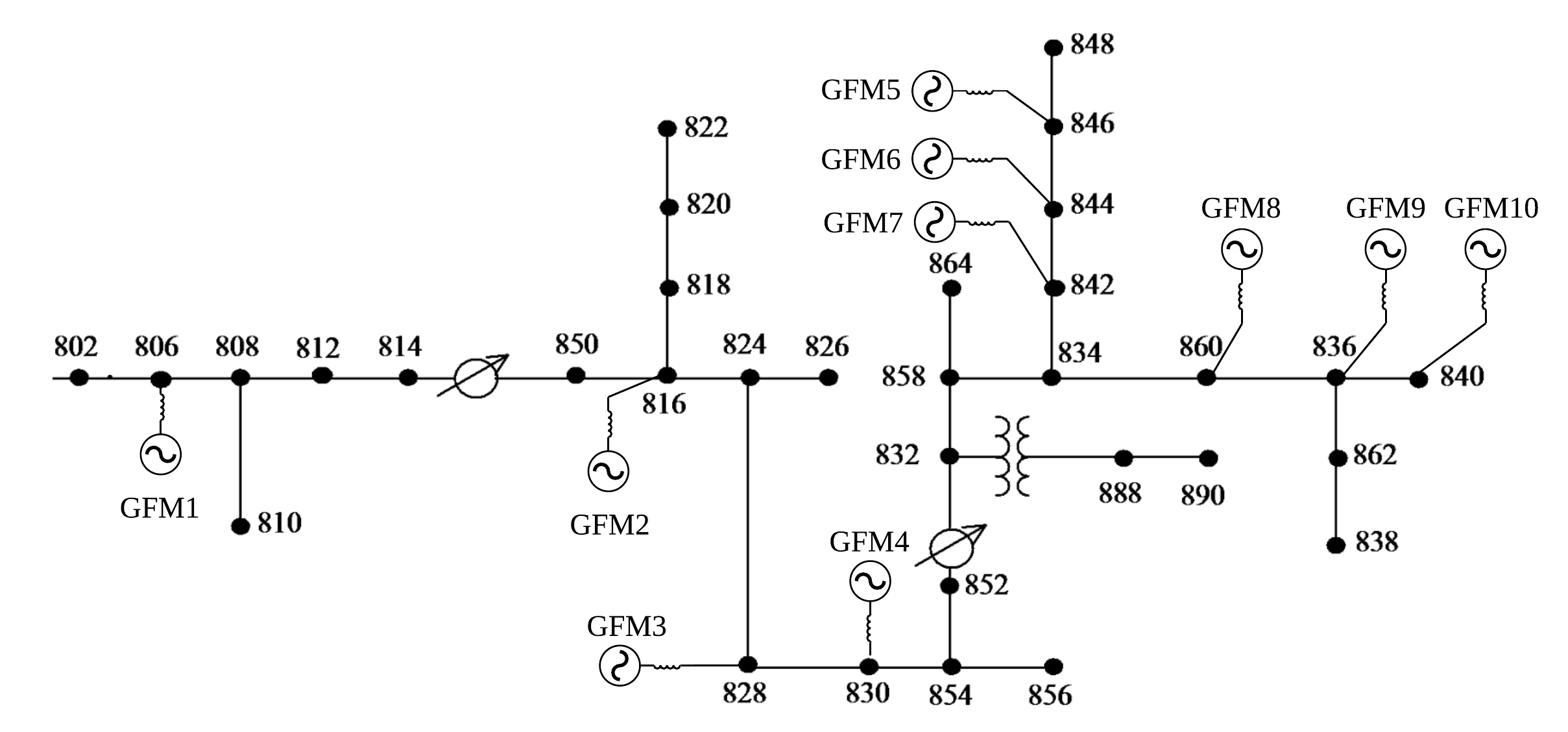}
    \caption{IEEE 34-bus network, a three-phase unbalanced system.}
    \label{fig:34bus_system}
    \vspace{-0.5 cm}
\end{figure}
\subsection{Effect of Conductor Type}
Various conductor types that have different conductor strands and current ratings are considered here for the analysis. The original conductor type mentioned in the IEEE 34-bus system data file is \textbf{C2}: \#2 6/1 ACSR 180(A). Four different types of conductors with different sizes, types of strands, and capacities for the same conductor material are considered, namely, \textbf{C1}: \#4 7/1 ACSR 140(A), \textbf{C3}: \#1/0 ACSR 230(A), \textbf{C4}: \#4/0 6/1 ACSR 340(A), and \textbf{C5}: \#336,400 26/7 ACSR 530(A) \cite{kersting2006distribution}. Changing the conductor type changes the effective line impedance of the lines connecting the buses. Conductor types are considered in descending order of both resistance (R) and reactance (X) values and in ascending order with respect to current capacity ratings. 

The stability boundaries between Buses 828 and 830 with respect to different conductor types are computed by applying the \textit{P-f} stability condition given in Theorem 1 and validated against eigenvalue analysis of \eqref{eq:linear_microgrid} as well as PSCAD \cite{pscad} simulations based on the full electromagnetic model. From Fig. \ref{fig:Pf_vs_CT_SB}, it can be seen that the eigenvalue-based stability boundary is a good approximation of the PSCAD-based stability boundary. Hence, it should be adequate to compare the analytically based stability regions against eigenvalue analysis results. Further, the stability boundary obtained by applying the sufficient conditions is an inner approximate, that is, strictly below the stability boundaries computed from eigenvalue analysis and PSCAD simulations. While the stability boundaries computed from PSCAD simulations and eigenvalue analysis both remain relatively unchanged, the analytically estimated stability region shrinks as the conductor type is changed from \textbf{C1} to \textbf{C5}, in a descending order of the effective line impedance values. Moreover, applying the discussion in Section \ref{sec:sensitivity}, it can be clearly seen how the reduction in effective line impedance affects the stability region estimates. 
\begin{figure}[h!]
    \centering
    \includegraphics[width = 0.9 \columnwidth]{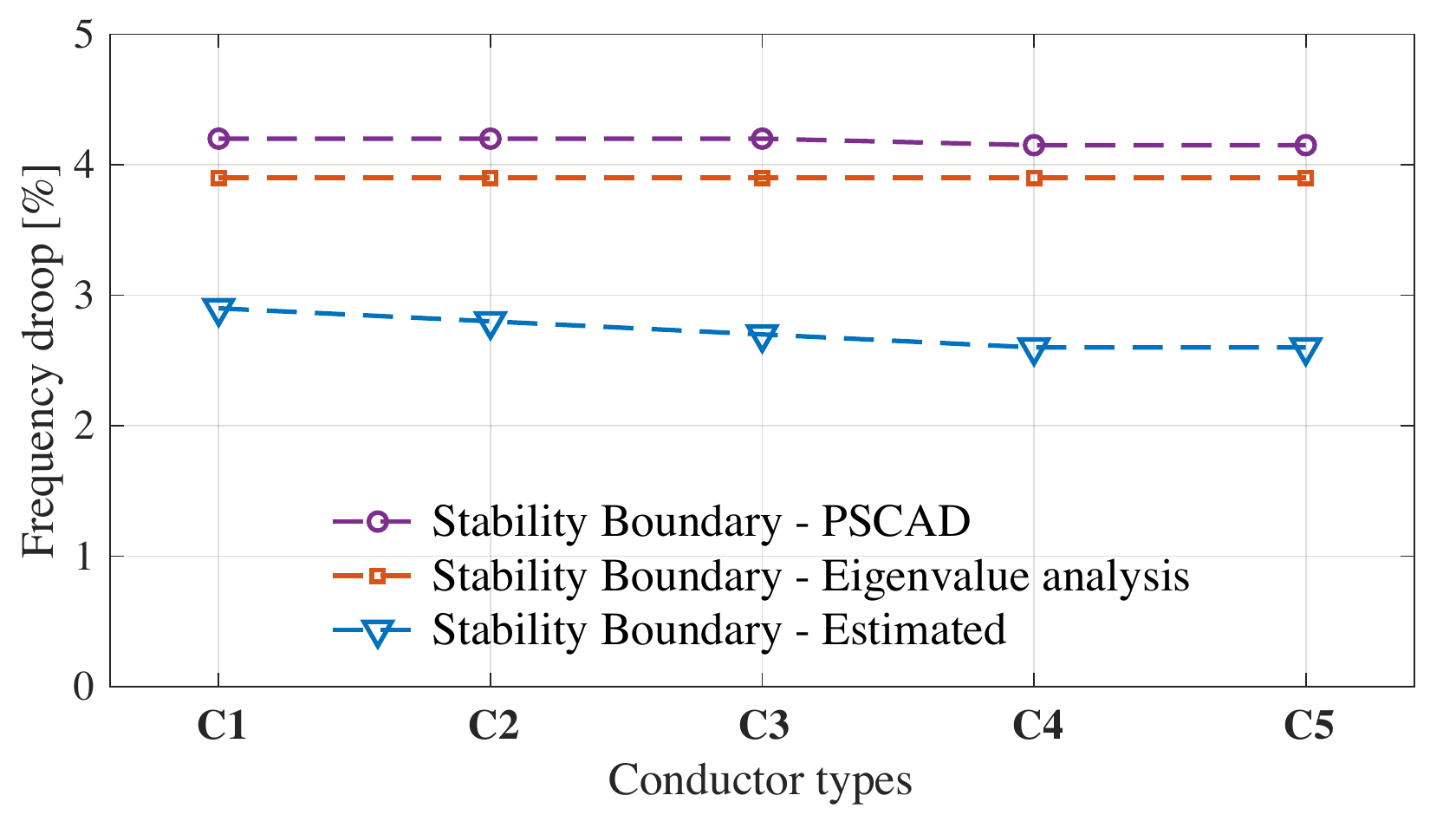}
    \caption{Stability boundary with respect to P-f droop between inverter buses 828-830 with different choice of conductors. Nominal line length is 20440 ft.}
    \label{fig:Pf_vs_CT_SB}
\end{figure}

\begin{figure}[h!]
    \centering
    \includegraphics[scale = 0.45]{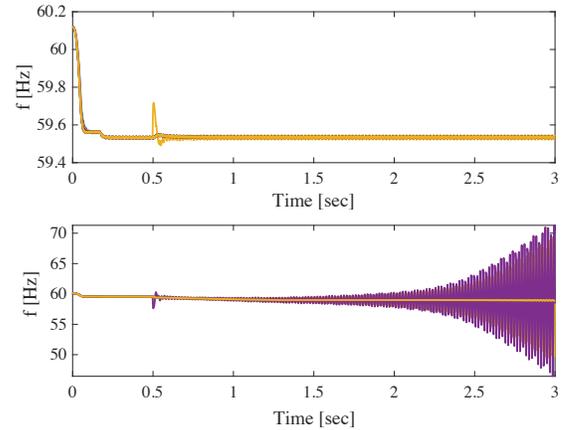}
    \caption{PSCAD simulations (corresponding to conductor type \textbf{C2}) showing stable (top) \& unstable (bottom) scenarios. In both cases, generation at Bus 828 increases by 0.3 p.u. at 0.5s. Corresponding \textit{P-f} droop gains for stable and unstable scenarios are 1\% and 5\%.}
    \label{fig:pscad_simulations}
\end{figure}

In PSCAD simulations, the stability is determined by monitoring the frequencies, and the active and reactive power generation in the presence of a disturbance. For illustration purposes, generation at one inverter (Bus 828) is increased by 0.3 p.u. at 0.5 seconds and the \textit{P-f} droop coefficient is gradually increased to identify the stability boundary. Fig.  \ref{fig:pscad_simulations} shows the frequencies at all inverter buses for both stable and unstable scenarios. When the \textit{P-f} droop gains are increased beyond 4.2\%, the system exhibits instability; the corresponding plots are shown in Fig. \ref{fig:pscad_simulations}. 

Note that, while verifying the small-signal stability using PSCAD simulations and eigenvalue analysis, for any changes in any part of the network, the PSCAD simulations have to be redone for the entire system. Furthermore, the system matrices for the entire system has to be recomputed for eigenvalue analysis. However, using the proposed distributed stability conditions, it is enough to just verify the small-signal stability corresponding to the location where changes have occurred. This shows the advantage of proposed distributed small-signal stability conditions over eigenvalue analysis and PSCAD simulations.

Let us denote the maximum \textit{P-f} droop values obtained via analytical estimates and the eigenvalue analysis as $\lambda_{p,est}^*$ and $\lambda_{p,eigen}^*$, respectively. Clearly, it is observed that $\lambda_{p,est}^*<\lambda_{p,eigen}^*$, i.e. the analytical stability region is a conservative estimate of the eigenvalue analysis-based stability region. The degree of conservativeness can be quantitatively measured by
\begin{equation}
    \frac{\lambda_{p,sim}^* - \lambda_{p,est}^*}{\lambda_{p,sim}^*} \times 100 \%.
    \label{eq:conservativeness}
\end{equation}
A heat-map is generated in Fig. \ref{fig:heat_map_CT} to illustrate the degree of conservativeness of the analytical estimate of the stability boundary, in comparison to the eigenvalue analysis-based estimate of the stability boundary. Conservativeness of the estimated stability boundaries vary with respect to the conductor types and bus locations. In particular, for any given pair of buses, the analytical estimate is observed to be increasingly more conservative as the conductor type is changed from \textbf{C1} to \textbf{C5}, in a descending order of the effective line impedance values. Note that the results of the row corresponding to the pair of buses 828 and 830 agree with the values in Fig. \ref{fig:heat_map_CT}.
\begin{figure}[h!]
    \centering
    \includegraphics[width = 0.9 \columnwidth]{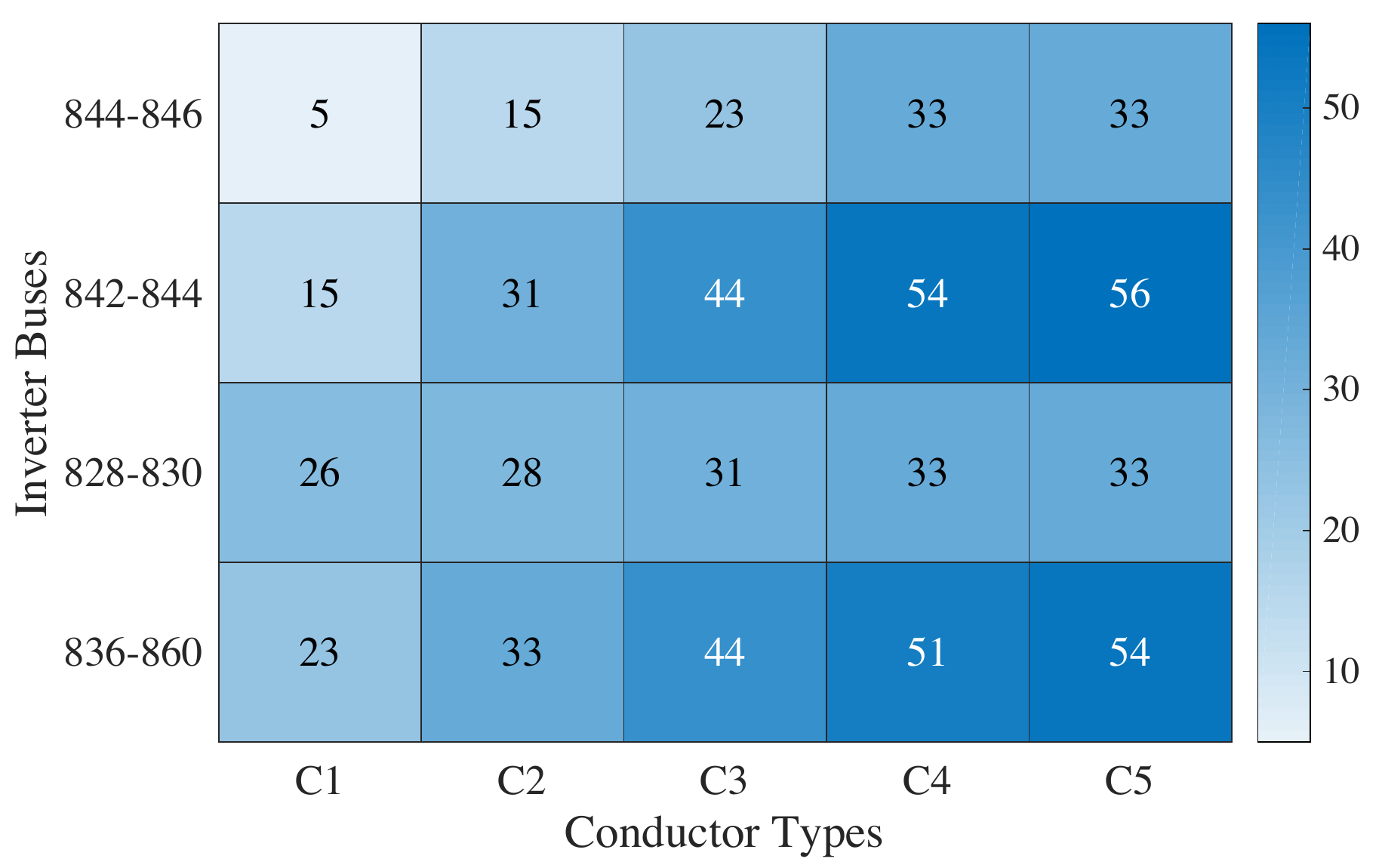}
    \caption{Percentage of conservativeness between the eigenvalue-based stability boundary and the estimated stability boundary applying analytical (sufficient) stability conditions by choosing different conductor types at various locations.}
    \label{fig:heat_map_CT}
\end{figure}

\subsection{Effect of Inverter Size}
In this subsection, applying the proposed analytical results, the effect of inverter size and location of inverter buses on the stability margins are studied. Let us consider the pair of Buses 828 and 830 with inverters. The inverter size at Bus 828 is kept constant at 200 kW, while the inverter size at Bus 830 is varied over from 120 kW to 180 kW. Fig.  \ref{fig:IS} shows the stability region between Buses 828 and 830 with respect to frequency droop and inverter size. The stability region obtained from eigenvalue analysis agrees closely with the PSCAD-based stability region. Although the stability boundaries computed from PSCAD simulations and eigenvalue analysis remain relatively unchanged, the analytically estimated stability region shrinks as the inverter size is increased. Moreover, the inverter size and location also play a significant role in determining the conservativeness of analytically based stability regions, as illustrated by the heat-map in Fig. \ref{fig:heat_map_IS}. In particular, the conservativeness of the analytical estimate increases with the increase in the inverter size. However, the degree of conservativeness is also sensitive to the location of the inverter. For instance, the sensitivity of the degree of conservativeness to the inverter size is relatively high on buses 830 and 846, while that on the buses 828 and 844 is relatively low. 

Fig. \ref{fig:heat_map_CT} and Fig. \ref{fig:heat_map_IS} best exemplifies the applicability of the proposed distributed stability conditions. It can be seen in these figures, we can estimate the stability region with respect to any pair of buses.
\begin{figure}[h!]
    \centering
    \includegraphics[width = 0.9 \columnwidth]{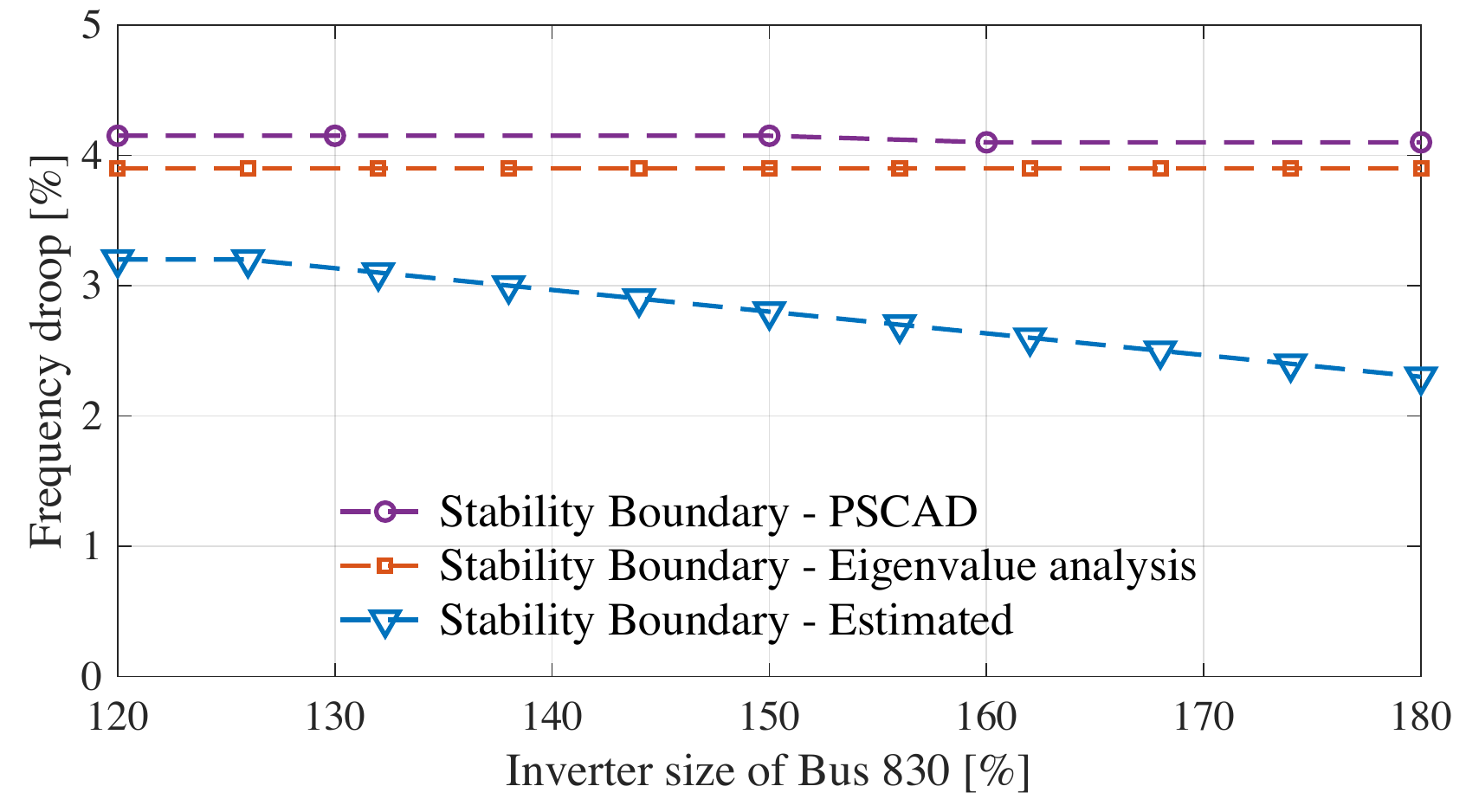}
    \caption{\textit{P-f} based stability boundary between Buses 828 and 830 when the inverter size at Bus 830 is varied. Nominal line length is 20440 ft.}
    \label{fig:IS}
\end{figure}

\begin{figure}[h!]
    \centering
    \includegraphics[width = 0.88 \columnwidth]{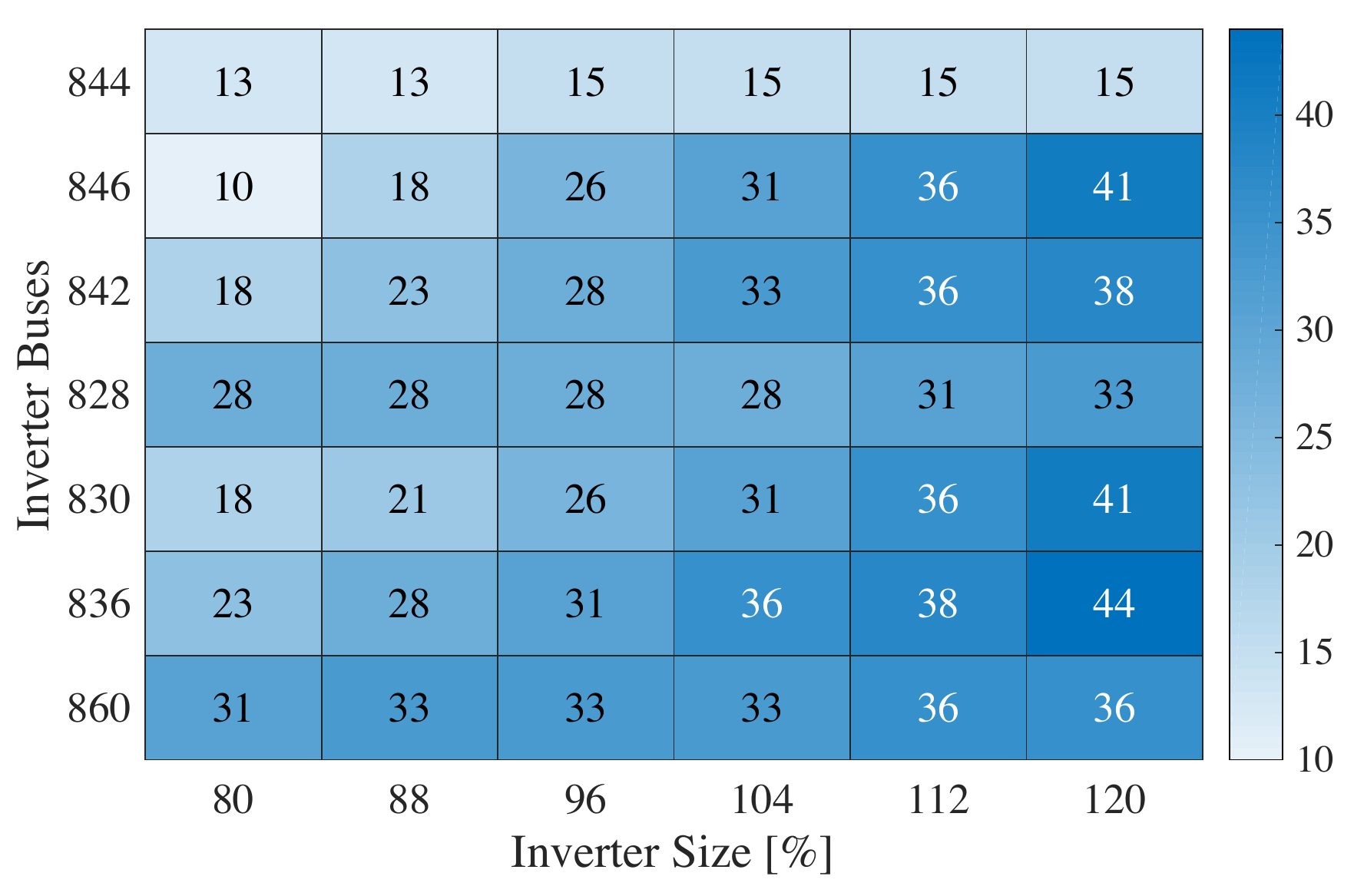}
    \caption{Percentage of conservativeness between the eigenvalue-based stability boundary and the estimated stability boundary applying analytical (sufficient) stability conditions for a range of inverter sizes ($\pm 20$\% from nominal) at various locations.}
    \label{fig:heat_map_IS}
\end{figure}

\section{Conclusion}
\label{sec:conclusion}

The distributed analytical stability conditions derived in this work scale linearly with the number of lines in the network and can be verified in a computationally efficient manner. Closed-form distributed sufficient conditions for small-signal stability are obtained using realistic system models, that are validated using PSCAD-based electromagnetic simulations. The proposed method allows the consideration of modeling complexities such as the electromagnetic transient effects of inverter-based systems and the three-phase, unbalanced, lossy microgrid networks.

PSCAD-based detailed electromagnetic simulations are used to validate the stability conditions on realistic test systems (e.g. IEEE 34-bus network). Moreover, studies on the sensitivities and conservativeness of the proposed analytical conditions with respect to various conductor types, as well as inverter sizes and locations are provided. In addition, introducing a metric for the degree of conservatives, heat-maps are generated to illustrate how conservative of the analytical results are under various scenarios: varying conductor types, inverter sizes, and inverter locations. Moreover, the proposed distributed small-signal stability conditions gives guarantees on stability regions although they provide conservative estimates. Future work will focus on analytical methods to further investigate the sensitivity of the stability regions to various design parameters, as well as on extending the current results to incorporate uncertainty in the network parameters.

\section*{Acknowledgment}
The authors offer their sincere thanks to Dr. Petr Vorobev (while he was at the Massachusetts Institute of Technology) and Dr. Long Vu, Dr. Kathleen Nowak, Dr. David Engel (from PNNL) for many helpful discussions, and Maura Zimmerman for proofreading the manuscript. We also thank the anonymous reviewers for their constructive comments on the manuscript.


%

\ifCLASSOPTIONcaptionsoff
  \newpage
\fi

\ifCLASSOPTIONcaptionsoff
  \newpage
\fi

\end{document}